\documentclass[letterpaper, 10 pt, conference]{ieeeconf}  

\IEEEoverridecommandlockouts                              

\usepackage{amsmath,amssymb,amsfonts}
\usepackage{graphicx}
\usepackage{comment}
\usepackage{circuitikz}
\usepackage[dvipsnames]{xcolor}
\allowdisplaybreaks
\usepackage{layout}
\usepackage{cite}
\usepackage{hyperref}
\usepackage{url}            
\usepackage{pgfplots}
\pgfplotsset{compat=newest}
\usepackage{tikz}
\usepackage{bm}
\usepackage{adjustbox}
\usepackage{MnSymbol}
\pgfplotsset{compat=1.18}
\usetikzlibrary{arrows.meta, decorations.markings, shapes.misc, shapes.geometric}

\newtheorem{theorem}{Theorem}
\newtheorem{assumption}{Assumption}
\newtheorem{definition}{Definition}
\newtheorem{lemma}{Lemma}

\title{\LARGE \bf
On the Existence of Quadratic Control Lyapunov Functions for Koopman-Operator based Bilinear Systems
}

\author{Sami Leon Noel Aziz Hanna$^1$, Nicolas Hoischen$^1$, Sandra Hirche$^1$, Armin Lederer$^2$
\thanks{*This work was supported by the DAAD program Konrad Zuse Schools of Excellence in Artificial Intelligence, sponsored by the Federal Ministry of Education and Research, and by the European Union’s Horizon Europe innovation action program under grant agreement No. 101093822,
”SeaClear2.0” and the Deutsche Forschungsgemeinschaft (DFG) as part of the Research Unit "Active Learning for Systems and Control (ALeSCo)" under project number 535860958.}
\thanks{$^1$Chair of Information-oriented Control, School of Computation, Information and Technology, Technical University of Munich, Germany {\tt\small [sami.noel, nicolas.hoischen, hirche]@tum.de}.}
\thanks{$^2$Department of Electrical and Computer Engineering, National University of Singapore {\tt\small armin.lederer@nus.edu.sg}}
}

\begin{document}
    \maketitle
    \thispagestyle{empty}
    \pagestyle{empty}
    
    \begin{abstract}             
        Koopman operator-based methods enable data-driven bilinear representations of unknown nonlinear control systems. Accurate representations often demand significantly higher dimensions than the original system, making control design challenging. Control Lyapunov Functions (CLFs) are widely used for controller synthesis, with quadratic CLF candidates being the most common due to their simplicity. Yet, we show that this class is highly restrictive, especially when the state dimension is large: under mild conditions, their existence implies stabilizability of the bilinear system by a constant input---that is, the control remains fixed over time. We establish this result by formulating a quadratically constrained quadratic program (QCQP) that exactly characterizes valid CLFs. Since QCQPs are NP-hard, we propose a convex semidefinite relaxation that offers a sufficient validity condition. For single-input systems, we prove that a quadratic CLF requires constant control stabilizability, and empirically demonstrate that this extends to high-dimensional multi-input systems in many cases.
    \end{abstract}

    \section{Introduction}\label{sec:intro}

When first-principles approaches for system identification reach their limits, such as in neuroprosthetics \cite{wright2016} or glucose dynamics \cite{wayne2012}, data-driven methods become indispensable for constructing suitable models. Recently, Koopman operator–theoretic techniques for nonlinear system identification have attracted considerable attention within the control community due to their ability to derive data-driven representations of unknown nonlinear autonomous systems in a linear but higher (possibly infinite)- dimensional space~\cite{koopman31}. Similarly, under mild conditions~\cite{goswami2020}, nonlinear, control-affine systems can be expressed with bilinear models, motivating the development of approximate, finite-dimensional representations through methods such as extended dynamic mode decomposition~\cite{otto2024}, kernel-based approaches~\cite{bevanda2024}, and neural networks~\cite{zheng2024}. Such models offer a simple structure for formulating control laws for a given task, such as stability. 

One of the most common techniques for designing stabilizing control laws are control Lyapunov functions (CLFs). The existence of such functions is known to be a sufficient and necessary condition for stabilizability~\cite{freeman2008}. Moreover, CLFs offer a principled approach to account for model uncertainties or exogenous disturbances—a crucial advantage within learning-based approaches that can quantify model uncertainty (see \cite{strasser2025kernelbased} for recent advancements in the Koopman operator framework). 

While CLFs offer many beneficial properties for control design with lifted bilinear Koopman models, synthesizing CLFs themselves remains challenging in general. This has led to a restriction to CLFs from simple function classes, among which quadratic CLFs are by far the most common in current literature \cite{strasser2024, huang2018, narasingam2023}. The synthesis of such CLFs only requires finding a positive definite matrix, enabling the search for them using convex semidefinite programs (SDPs) \cite{huang2018} or linear matrix inequalities \cite{strasser2024}. The conditions for the feasibility of the proposed optimization problems are unclear, resulting in a limited understanding of when the proposed approaches in the literature are successful.
Particularly in high-dimensional settings, such as those in Koopman operator learning, it is unknown primarily when a bilinear representation actually admits a quadratic CLF. 

To the best of our knowledge, only the requirement of constant control that asymptotically stabilizes the system is known as a straightforward requirement to ensure the existence of a quadratic CLF in this scenario \cite{andrieu2013}. Constant control, i.e., applying the same control input at all times, transforms the bilinear representation into a linear system such that the existence of quadratic CLFs becomes obvious. However, these finite-dimensional bilinear models obtained from Koopman operator-theoretic methods are typically just approximations of nonlinear systems, for which constant control stabilizability is highly questionable.

We show that the existence of a quadratic CLF for a bilinear system is equivalent to the sufficiency of constant control for stabilization under mild assumptions. For this, we first derive a quadratically constrained quadratic program (QCQP) for stability analysis by exploiting the structure of bilinear systems and quadratic CLFs. This optimization problem yields identical solutions as its relaxation into an SDP under certain assumptions. We demonstrate that these assumptions hold for bilinear systems with scalar control inputs and empirically show that they are commonly satisfied in systems where the state dimension significantly exceeds that of the control inputs. Finally, we prove that the SDP-based stability condition holds if and only if constant control can stabilize the bilinear system. Due to the restrictiveness of the requirement of constant control stabilization, our results raise critical questions regarding the scalability of existing quadratic CLF approaches to high-dimensional bilinear Koopman embeddings.

The remainder of this paper is structured as follows. Section~\ref{sec:background} provides background on the Koopman operator, followed by a formal problem statement in Section \ref{sec:problem}. Section~\ref{sec:constant_control_almost_necessary} works out existence conditions for quadratic CLFs and presents our main result, namely the requirement of stabilizability by constant control for the existence of a quadratic CLF for single-input systems, and, under mild conditions, for high-dimensional multi-input systems. Finally, Section~\ref{sec:conclusion} concludes the paper. 

    \section{Koopman Operator Background}\label{sec:background}
The Koopman operator\footnote{\textbf{Notation:} Matrices and vectors are written in bold. We write $\mathbb{R}$ for the real numbers and $\mathbb{C}$ for the complex numbers, respectively. $\mathbb{C}^-$ for the left half of the complex plane, $\mathbb{N}_{\leq N}$ for the natural numbers less than or equal to $N$, $\mathbb{Z}$ for integers and $\mathcal{H}$ denotes a Hilbert space. $\bm Q\succ \bm 0$ means the matrix $\bm Q$ is positive definite and $\bm Q\succeq \bm 0$ positive semi-definite (similarly $\prec, \preceq$ for the negative analogue). The same notation is applied to functions, e.g., $V\succ 0$ means the function $V$ is positive definite. We write $\operatorname{sym}(\bm A)$ for the symmetric component of matrix $\bm A$, i.e. $\operatorname{sym}(\bm A) = \frac{1}{2}(\bm A + \bm A^\top)$.} $\mathcal{K}_t:\mathcal{H}\rightarrow\mathcal{H}$ \cite{koopman31} is defined over the composition
\begin{align}
    \mathcal{K}_t\varphi := \varphi\circ \bm F^t(\bm x(\bm 0)) 
\end{align}
of an observable functional $\varphi\in\mathcal{H}:\mathcal{X}\rightarrow\mathbb{C}$ with a smooth and Lipschitz flow \cite{bevandar2021} 
\begin{align}
    \bm F^t(\bm x(0)) := \bm x(0) + \int_{\tau=0}^t \bm f(\bm x(\tau))\ \mathrm{d}\tau
\end{align}
of the autonomous dynamical system $\dot{\bm x} = \bm f(\bm x)$, $\bm x\in\mathbb{R}^n$ with nonlinear drift $\bm f:\mathbb{R}^n\rightarrow\mathbb{R}^n$. We assume forward invariance of $\mathcal{X}\subset \mathbb{R}^n$ to ensure that both the Koopman operator and observable functions are well-defined. In this context, the Koopman operator acts on an observable function by propagating it forward in time within the Hilbert space $\mathcal{H}$, analogous to how the flow evolves the state $\bm{x}$ in the state space $\mathcal{X}$. Notably, this temporal evolution is linear, even when the underlying dynamics of the state are nonlinear. Similarly, for a control-affine system of the form
\begin{align}\label{eq:nonlinear_contr_affine_sys}
    \dot {\bm x} = \bm f(\bm x) + \sum_{i=1}^m\bm g_i(\bm x) u_i, \quad \bm u\in\mathbb{R}^m,
\end{align}
with control vector field $\bm g_i:\mathbb{R}^n\rightarrow\mathbb{R}^n$, the Koopman operator yields a bilinear, possibly infinite dimensional representation under the invariant subspace condition, given in \cite[Theorem 1]{goswami2017}. The evolution of an observable functional $\varphi$ is given by the partial differential equation
\begin{align}
    \dot \varphi = \mathcal{L}_{\bm f}\varphi + \sum_{i=1}^m\mathcal{L}_{\bm g_i}\varphi u_i,
\end{align}
where $\mathcal{L}:\mathcal{H}\rightarrow\mathcal{H}$ denotes the Lie-derivative, which is a linear but infinite dimensional operator. To obtain a classic control system with a finite state dimension, the method of Extended Dynamic Mode Decomposition (EDMD) projects these operators onto a finite dimensional subspace of so called observable functionals $\varphi_i$, collected in a dictionary $\mathcal{D}$. In some cases, this subspace can be chosen to be invariant under their temporal evolution, but this is typically not the case, especially if this subspace is chosen without explicit knowledge about the system dynamics. In those cases, one can remain on this subspace by means of orthogonal projection and derive representation error estimates \cite{schaller_2023}. The finite dimensional, matrix-valued operator approximators for the drift ($\bm A$) and control terms ($\bm B_i$) are then derived as follows: Assuming that data $\{\bm{x}_j^{\bar{\bm{u}}}, \dot{\bm{x}}_j^{\bar{\bm{u}}}\}_{j=1}^{J_{\Bar{u}}}$ of the system \eqref{eq:nonlinear_contr_affine_sys}, excited by constant control inputs $\bar{\bm{u}}$, are available, construct the vector of observables  
$\Hat{\bm{\Phi}} = [\varphi_1, \ldots, \varphi_N]^\top$ as well as the regression observable matrix 
\begin{align}
    \mathbb{X}^{\bar{\bm{u}}} = [\Hat{\bm{\Phi}}(\bm{x}_1^{\bar{\bm{u}}}), \ldots, \Hat{\bm \Phi}(\bm x_{J_{\Bar{u}}}^{\bar{\bm{u}}})]
\end{align}
and regression observable derivative matrix
\begin{align}
    \mathbb{Y}^{\bar{\bm{u}}} = [\nabla \Hat{\bm{\Phi}}(\bm{x}_1^{\bar{\bm{u}}})^\top\cdot\dot{\bm{x}}_1^{\bar{\bm{u}}}, \ldots, \nabla \Hat{\bm{\Phi}}(\bm x_{J_{\Bar{u}}}^{\bar{\bm{u}}})^\top\cdot\dot{\bm{x}}_{J_{\Bar{u}}}^{\bar{\bm{u}}}].
\end{align}
The matrices $\bm{A}$ and $\bm{B}_i$ are then identified by solving the regression problems \cite{strasser2024}
\begin{align}
    \bm{A} &= \arg\min_{\hat{\bm{A}} \in \mathbb{R}^{N \times N}} 
        \|\mathbb{Y}^{\bm{0}} - \hat{\bm{A}} \mathbb{X}^{\bm{0}}\|_F, \\
    \bm{B}_i &= \arg\min_{\hat{\bm{B}}_i \in \mathbb{R}^{N \times N}} 
        \|\mathbb{Y}^{\bm{e}_i} - \hat{\bm{B}}_i \mathbb{X}^{\bm{e}_i}\|_F - \bm{A},
\end{align}
where $\bm{\Bar{u}}$ is set to the zero input and basis vectors $\bm{e}_i \in \mathbb{R}^m$ in order to excite the system in all input directions and $\|\cdot\|_F$ denotes the Frobenius norm. For the lifted state $\bm{z} = \bm {\Hat{\Phi}}(\bm x)$, the resulting finite-dimensional Koopman-based model takes the bilinear form
\begin{align}\label{eq:bilinear_form}
    \dot{\bm{z}} = \bm{A}\bm{z} + \sum_{i=1}^m \bm{B}_i \bm{z} u_i, 
    \quad \bm z \in \mathbb{R}^N, \quad \bm{u} \in \mathbb{R}^m.
\end{align}
In general, the learned matrices do not possess any particular structure, and their dimensionality depends on the number of observables chosen in $\mathcal{D}$. For simple system dynamics, low-dimensional, handcrafted observables may achieve satisfactory learning accuracy \cite{strasser2024}. However, for practically relevant systems with complex dynamics and little prior knowledge about suitable observable functions, a large ($N\gg n$) dictionary of generic observables, e.g., monomials, is typically required for minor learning errors \cite{bruder2021}. Recently, kernel based methods were introduced, which model observables in a reproducing kernel hilbert space, which mitigates the problem of handcrafting observables, as they can be written as linear kernel combinations \cite{bevanda2024}.  
    \section{Problem Formulation}\label{sec:problem}
Before analyzing bilinear systems, we formally introduce the concept of stability.
\begin{definition}[Stability]\label{def:stability}
    For a given feedback control law, the equilibrium $\bm z^* = \bm 0$ of the system \eqref{eq:bilinear_form} is asymptotically stable, if for every $\epsilon>0\ \exists\ \delta>0$, such that
    \begin{align}
        \|\bm z(t_0)\|<\delta \Rightarrow \|\bm z(t)\|<\epsilon, \ \forall t\geq t_0,
    \end{align}
    and $\delta(\epsilon,t_0)>0$ can be chosen, such that
    \begin{align}
        \|\bm z(t_0)\|<\delta \Rightarrow \lim_{t\rightarrow \infty}\|\bm z(t)\| = 0.
    \end{align}
    If the system admits an asymptotically stabilizing controller, it is said to be asymptotically stabilizable.
\end{definition}
To synthesize a stabilizing controller, we can use CLFs as they are both sufficient and necessary for this task \cite{freeman2008}. A CLF is formally defined to satisfy the following properties.
\begin{definition}[Control Lyapunov function]\label{def:CLF}
    A differen– tiable function $V:\mathcal{X}\mapsto\mathbb{R}$ is called control Lyapunov function, if it satisfies
    \begin{align}\label{eq:CLF}
        V(\bm x) \succ 0, \quad \inf_{\bm u\in\mathbb{R}^m} \dot{V}(\bm x, \bm u) \prec 0.
    \end{align}    
\end{definition}
Note, that we do not assume any restriction on the control input. In this work, we focus on homogeneous bilinear forms, where homogeneity is defined as follows.
\begin{definition}[Homogeneity]\label{def:homogeneity}
    A function $f\!:\mathbb{R}^N \rightarrow \mathbb{R}$ is homogeneous if there exists an integer $p\in\mathbb{Z}$ such that \looseness=-1
    \begin{align}\label{eq:homogeneity}
        f(t\bm z) = t^pf(\bm z), \quad \forall \bm z\in\mathbb{R}^N, \ t\in\mathbb{R}.
    \end{align}
    Vector fields are considered to be homogeneous if each entry satisfies \eqref{eq:homogeneity}.
\end{definition}
Our main objective is to find sufficient and necessary conditions under which the homogeneous bilinear system \eqref{eq:bilinear_form} admits a control Lyapunov function (CLF) in quadratic form
\begin{align}\label{eq:quadratic_form}
    \mathcal{V}_\text{quad} := \{V_{\bm Q}\succ 0 \mid V_{\bm Q}(\bm z) \mapsto \frac{1}{2}\bm z^\top \bm Q\bm z\}.
\end{align}
More specifically, we seek to find conditions for the existence of a symmetric positive definite matrix $\bm Q$ such that the corresponding quadratic candidate function $V_{\bm Q}$ satisfies the requirements given in \eqref{eq:CLF}. 
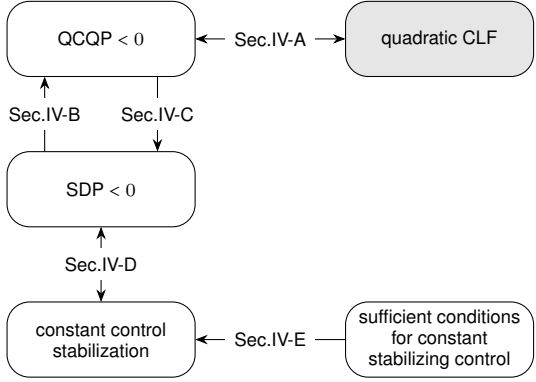
\begin{figure}[t]
    \centering
    \vspace{.2cm}
        \begin{circuitikz}
        \tikzstyle{every node}=[font=\scriptsize]
        \draw[rounded corners=0.3cm]  (11.25,19) rectangle  node {\scriptsize\sffamily $\text{QCQP}<0$} (13.75,18);
        \draw[rounded corners=0.3cm]  (11.25,17) rectangle  node {\scriptsize\sffamily $\text{SDP}<0$} (13.75,16);
        \draw[rounded corners=0.3cm]  (11.25,15) rectangle  node[align=center] {\scriptsize\sffamily constant control\\ \scriptsize\sffamily stabilization } (13.75,14);
        \draw [<-, >=Stealth] (11.75,18) -- (11.75,17) node[pos=0.5, fill=white]{\sffamily Sec.\ref{sec:SDR}};
        \draw [->, >=Stealth] (13.25,18) -- (13.25,17) node[pos=0.5, fill=white]{\sffamily Sec.\ref{sec:SDP-exactness}};

        \draw[rounded corners=0.3cm]  (15.75,15) rectangle  node[align=center] {\scriptsize\sffamily sufficient conditions \\ \scriptsize\sffamily for constant \\ \scriptsize\sffamily stabilizing control} (18.25,14);
        \draw[rounded corners=0.3cm, fill = gray!20]  (15.75,19) rectangle  node {\scriptsize\sffamily quadratic CLF} (18.25,18);

        \draw [<-, >=Stealth] (13.75,14.5) -- (15.75,14.5)node[pos=0.5, fill=white]{\sffamily Sec.\ref{sec:suff_cond}};
        \draw [<->, >=Stealth] (13.75,18.5) -- (15.75,18.5)node[pos=0.5, fill=white]{\sffamily Sec.\ref{sec:existence_verification}};
        \draw [<->, >=Stealth] (12.5,16) -- (12.5,15)node[pos=0.5, fill=white]{\sffamily Sec.\ref{sec:constant_control}};
    \end{circuitikz}
    \caption{Relation between the SDP, QCQP, existence of a quadratic CLF, and constant control. While the existence of a constant, stabilizing controller is sufficient for the existence of a quadratic CLF, necessity holds if the SDP and QCQP attain the same optimal value. We also give sufficient conditions for the existence of constant stabilizing control.}
    \label{fig:flowchart}    
\end{figure}

    \section{Constant Control as an Almost-Necessary Condition in High Dimensions}\label{sec:constant_control_almost_necessary}

In this section, we show the equivalence between the existence of a quadratic CLF and constant stabilizing control for bilinear systems using the sequence of arguments outlined in Figure~\ref{fig:flowchart}. 
To this end, we propose formulating a quadratically constrained quadratic program (QCQP) in Section~\ref{sec:existence_verification}, in which the negativity of the objective value serves as both a necessary and sufficient condition for the existence of a quadratic control Lyapunov function. Owing to the inherent non-convexity of the problem, we employ a semidefinite programming relaxation in Section~\ref{sec:SDR} that yields a sufficient condition for the existence of a CLF. In Section~\ref{sec:SDP-exactness}, we show the equivalence between the QCQP and its SDP relaxation for single-input systems and demonstrate empirically that it also holds for high-dimensional multi-input systems. We derive constant control as a sufficient and necessary condition for the existence of a quadratic CLF in Section~\ref{sec:constant_control} by assuming this equivalence. Finally, we give sufficient conditions for the existence of constant stabilizing control and consequently for the existence of a quadratic CLF in Section~\ref{sec:suff_cond}. 

\subsection{CLF-Condition through a QCQP}\label{sec:existence_verification}
Given bilinear system \eqref{eq:bilinear_form}, the Lyapunov descent condition for the quadratic function class \eqref{eq:quadratic_form} is given by 
\begin{align}\label{eq:descent_condition}
    \dot V_{\bm Q} = \bm z^\top \bm Q \bm A\bm z + \sum_{i=1}^m \bm z^\top \bm Q \bm B_i\bm zu_i \prec 0.
\end{align}
The bilinear structure separates the Lyapunov derivative into an autonomous and a control-dependent term. Notably, both are homogeneous quadratic expressions depending on the Lyapunov matrix $\bm Q$. The zero-level set of each summand in the second term defines a manifold
\begin{align}\label{eq:generic_zero_lvl_set}
    \mathcal{M}^i_{V_{\bm Q}} := \{\bm z\in\mathbb{R}^n\backslash \{\bm 0\}\mid\bm z^\top \bm Q\bm B_i\bm z = 0\},
\end{align}
such that on the intersection
\begin{align}
    \mathcal{M}_{V_{\bm Q}} := \bigcap_{i=1}^m \mathcal{M}^i_{V_{\bm Q}},
\end{align}
no controller can enforce Lyapunov descent. This set is important for a structural CLF search program as it manifests the regions in state space, where the autonomous term $\bm z^\top \bm Q \bm A\bm z$ must provide negativity by itself to ensure descent condition \eqref{eq:descent_condition}. A valid candidate $V_{\bm Q}\succ 0$, therefore either satisfies the negativity of the autonomous term when control authority is lost, that is
\begin{align}\label{eq:negativity_condition}
    \bm z^\top\bm Q\bm A \bm z < 0 \ \forall \bm z\in\mathcal{M}_{V_{\bm Q}},
\end{align}
or ensures control authority globally, i.e.
\begin{equation}\label{eq:emptiness_condition}
    \mathcal{M}_{V_{\bm Q}} = \emptyset.
\end{equation}
A high number of unstable eigenvalues in $\bm A$ introduces many different directions in the set $\{\bm z\in\mathbb{R}^N\mid \bm z^\top\bm Q\bm A \bm z \geq 0\}$, which makes the task of separating it from $\mathcal{M}_{V_{\bm Q}}$ difficult.
Notice that both expressions $\bm z^\top \bm Q\bm B_i\bm z$ and $\bm z^\top \bm Q\bm A\bm z$ satisfy the homogeneity property \ref{def:homogeneity} with $p=2$ and therefore do not change sign after scaling the vector $\bm z$. Therefore, without loss of generality, the state is restricted to the $N-1$ dimensional unit sphere
\begin{align}\label{eq:unit_sphere_constr}
    \mathbb{S}^{N-1}:=\{\bm z\in\mathbb{R}^N\mid \|\bm z\| = 1\},
\end{align}
to keep the problem bounded. Based on this insight, we incorporate \eqref{eq:negativity_condition} and \eqref{eq:emptiness_condition} into the QCQP
\begin{subequations}\label{eq:QCQP}
\begin{align}
    [\star] := &\max_{\bm z\in\mathbb{S}^{N-1}} \bm z^\top \bm Q\bm A\bm z  \\
    &\text{s.t. } \bm z^\top \bm Q \bm B_i\bm z = 0, \quad i = 1,...,m. \label{eq:QCQP_constraints}
\end{align}
\end{subequations}
Constraints \eqref{eq:QCQP_constraints} restrict the optimization problem to the the space $\mathcal{M}_{V_{\bm Q}}$.
Equivalently to \eqref{eq:negativity_condition} and \eqref{eq:emptiness_condition}, we can thus verify whether a positive definite matrix $\bm Q$ parameterizes a valid CLF by the negativity of the optimal value defined as
\begin{align}
    J^*_\text{QCQP} = \begin{cases}
        [\star] & \text{if \eqref{eq:QCQP} is feasible}\\
        -\infty & \text{otherwise.}
    \end{cases}
\end{align}

\begin{lemma}\label{lem:QCQP_validation}
    Matrix $\bm Q$ parameterizes a valid CLF $V_{\bm Q}$ if and only if $J^*_\text{QCQP}(\bm Q) < 0$.
\end{lemma}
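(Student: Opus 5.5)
The plan is to unpack Definition~\ref{def:CLF} for $V_{\bm Q}$ pointwise and then prove both implications. Fix $\bm Q \succ \bm 0$, so $V_{\bm Q}\succ 0$ holds automatically and only the descent condition matters. For fixed $\bm z \neq \bm 0$, the map $\bm u \mapsto \dot V_{\bm Q}(\bm z,\bm u)$ in \eqref{eq:descent_condition} is affine in $\bm u$:
\begin{align*}
\dot V_{\bm Q}(\bm z,\bm u) = a(\bm z) + \bm b(\bm z)^\top \bm u,
\end{align*}
where $a(\bm z) = \bm z^\top \bm Q \bm A \bm z$ and $b_i(\bm z) = \bm z^\top \bm Q \bm B_i \bm z$. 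Because $\bm u$ is unconstrained, the infimum over $\bm u$ is $-\infty$ whenever $\bm b(\bm z) \neq \bm 0$, and equals $a(\bm z)$ whenever $\bm b(\bm z) = \bm 0$. So the pointwise condition $\inf_{\bm u}\dot V_{\bm Q}(\bm z,\bm u) < 0$ for all $\bm z \neq \bm 0$ is equivalent to
\begin{align*}
a(\bm z) < 0 \quad \forall \bm z \in \mathcal{M}_{V_{\bm Q}},
\end{align*}
which is vacuous if $\mathcal{M}_{V_{\bm Q}} = \emptyset$. This recovers the dichotomy \eqref{eq:negativity_condition}/\eqref{eq:emptiness_condition}.

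Next I reduce to the sphere. Since $a$ and every $b_i$ are homogeneous of degree $2$ (Definition~\ref{def:homogeneity}), both the membership $\bm z \in \mathcal{M}_{V_{\bm Q}}$ and the sign of $a(\bm z)$ are invariant under $\bm z \mapsto t\bm z$ for $t \neq 0$. Hence the condition above holds if and only if $a(\bm z) < 0$ for all $\bm z \in \mathcal{M}_{V_{\bm Q}} \cap \mathbb{S}^{N-1}$, which is exactly the feasible set of \eqref{eq:QCQP}.

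Both directions then follow. If \eqref{eq:QCQP} is infeasible, then $\mathcal{M}_{V_{\bm Q}} = \emptyset$, so $V_{\bm Q}$ is a CLF, and by definition $J^*_\text{QCQP} = -\infty < 0$; the equivalence holds in this case. If \eqref{eq:QCQP} is feasible, its feasible set is an intersection of the compact sphere with closed sets (zero sets of continuous quadratics), hence compact and nonempty. The continuous objective $a$ therefore attains its maximum $[\star]$ at some feasible $\bm z^\star$. This gives $[\star] < 0$ if and only if $a < 0$ on the whole feasible set, i.e. if and only if $V_{\bm Q}$ is a CLF.

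The main obstacle is this attainment step. Without compactness, a supremum could equal $0$ while $a$ stays strictly negative on the feasible set, or conversely, and the strict inequality would not transfer. Compactness of the feasible set rules this out, so the strict inequality in Lemma~\ref{lem:QCQP_validation} is justified; the point is worth stating explicitly. I would also note that $\inf_{\bm u}\dot V_{\bm Q} \prec 0$ is read pointwise for $\bm z \neq \bm 0$, as the paper does.
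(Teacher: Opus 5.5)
Your proof is correct and follows the paper's route: you split into the infeasible case ($\mathcal{M}_{V_{\bm Q}}=\emptyset$, where the unconstrained input makes $\inf_{\bm u}\dot V_{\bm Q}=-\infty$) and the feasible case (negativity of $\bm z^\top\bm Q\bm A\bm z$ on $\mathcal{M}_{V_{\bm Q}}$), and you use degree-$2$ homogeneity to justify restricting to $\mathbb{S}^{N-1}$. Your compactness argument makes explicit that the maximum is attained, which the paper's necessity step (``there exists a state $\bm z$ in the set $\mathcal{M}_{V_{\bm Q}}$ such that $\dot V_{\bm Q}\geq 0$'') uses tacitly; like the paper, you rightly treat $\bm Q\succ\bm 0$ as a standing assumption.
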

\begin{proof}    
    We start with sufficiency: If $J^*_\text{QCQP}(\bm Q) < 0$, then problem \eqref{eq:QCQP} is either infeasible or returns a negative value. Infeasibility means there exists no state vector $\bm z$ satisfying the equality constraints \eqref{eq:QCQP_constraints}, therefore \eqref{eq:emptiness_condition} is fulfilled and the controller can shape the Lyapunov derivative globally. If the problem returns a negative value, then \eqref{eq:negativity_condition} is fulfilled and negativity of the Lyapunov derivative is guaranteed in $\mathcal{M}_{V_{\bm Q}}$. In both cases, $\bm Q$ parameterizes a valid Lyapunov function. For necessity, assume $V_{\bm Q}$ is a CLF and $J^*_\text{QCQP}(\bm Q) \geq 0$. This means problem \eqref{eq:QCQP} is feasible and there exists a state $\bm z$ in the set $\mathcal{M}_{V_{\bm Q}}$ such that $\dot V_{\bm Q} \geq 0$, which is a contradiction.
\end{proof}

Lemma \ref{lem:QCQP_validation} provides an exact validity condition for a quadratic CLF matrix $\bm Q$. Even though QCQPs \eqref{eq:QCQP} are NP hard to solve \cite{argue2023}, they can be relaxed by an SDP for convexity, as we show in the next section. 

\subsection{Semidefinite Relaxation}\label{sec:SDR}
To reformulate the QCQP into an SDP, we reformulate the quadratic expressions using the trace operator and its rotational invariance. For the objective of \eqref{eq:QCQP}, this yields
\begin{align}
    \bm z^\top \bm Q\bm A \bm z &= \bm z^\top \operatorname{sym}(\bm Q\bm A) \bm z = \operatorname{trace}(\operatorname{sym}(\bm Q\bm A) \bm Z)
\end{align}
where $\bm Z = \bm z\bm z^\top \succeq \bm 0$ and $\operatorname{rank}(\bm Z)=1$. The same transformation can be applied to the equality constraints \eqref{eq:QCQP_constraints} and the unit sphere constraint \eqref{eq:unit_sphere_constr} in its quadratic form
\begin{align}
    \bm z^\top \bm z = 1\qquad
    \Leftrightarrow \qquad\operatorname{trace}(\bm z\bm z^\top) = \operatorname{trace}(\bm Z) = 1.
\end{align}
To obtain a convex SDP, we drop the requirement $\operatorname{rank}(\bm Z)=1$, such that the relaxed problem becomes
\begin{subequations}\label{eq:SDP}
\begin{align}
    [\ostar] = &\max_{\bm Z\succeq \bm 0}\operatorname{trace}((\operatorname{sym}(\bm Q\bm A) \bm Z) \quad \\
    &\text{s.t. }\operatorname{trace}(\bm Z) = 1\label{eq:SDP_con1}\\
    &\quad~~\!\operatorname{trace}((\operatorname{sym}(\bm Q\bm B_i) \bm Z) = 0, \quad i = 1,...,m.\label{eq:SDP_con2}
\end{align}
\end{subequations}
Similarly to before, we define the optimal value as 
\begin{align}\label{eq:SDP_sol}
    J^*_\text{SDP}(\bm Q) = 
    \begin{cases}
        [\ostar] & \text{if \eqref{eq:SDP} is feasible}\\
        -\infty & \text{otherwise.}
    \end{cases}
\end{align}
By ignoring the requirement $\operatorname{rank}(\bm Z) = 1$, the \textit{nonconvex} feasible set of matrices for the QCQP \eqref{eq:QCQP} is enlarged to a \textit{convex} set for the SDP \eqref{eq:SDP} as illustrated in Figure \ref{fig:feasible_domain}. 
\tikzset{
  mystar/.style={shape=star,star points=5,minimum size=#1,star point ratio=3,rounded corners=#1/10}
}
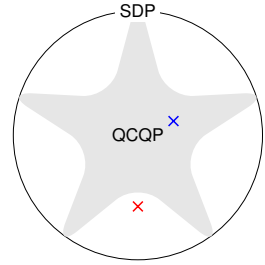
\begin{figure}[t]
    \centering
    \begin{adjustbox}{width = .2\textwidth, center}
    \begin{circuitikz}
        \tikzstyle{every node}=[font=\Large]
        \draw [color=black] (6,16) circle (3.5cm);
        \node[color = black, fill = gray!20, mystar={7.9cm}] at (6,16) {};
        \node[draw, cross out, very thick, minimum size=6pt, blue] at (7,16.4) {};
        \node[draw, cross out, very thick, minimum size=6pt, red] at (6,14) {};
        \node [font=\Large, color=black, fill = white] at (6,19.5) {\sffamily SDP};
        \node [font=\Large, color=black] at (6,16) {\sffamily QCQP};
    \end{circuitikz}
    \end{adjustbox}
    \vspace{-0.6cm}
    \caption{Illustration of feasible domains for the SDP and QCQP, the blue cross illustrates a rank 1 solution of the SDP, while the red cross illustrates a solution of rank larger than one. Note that this illustration only serves to show the set enlargement as 
    rank 1 matrices lie at the border of a semidefinite cone.}    
    \label{fig:feasible_domain}
\end{figure}
As the SDP solution \eqref{eq:SDP_sol} provides an upper bound for the QCQP solution, we immediately obtain a sufficient condition for a quadratic CLF.
\begin{lemma}\label{lem:SDP_validation}
Matrix $\bm Q\succ\bm 0$ parameterizes a valid CLF $V_{\bm Q}$ if $J^*_\text{SDP}(\bm Q)<0$.
\end{lemma}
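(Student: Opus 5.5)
The plan is to reduce the claim to Lemma~\ref{lem:QCQP_validation}. It then suffices to show $J^*_\text{QCQP}(\bm Q)\le J^*_\text{SDP}(\bm Q)$ for every symmetric $\bm Q\succ\bm 0$. If this holds, $J^*_\text{SDP}(\bm Q)<0$ forces $J^*_\text{QCQP}(\bm Q)<0$, and Lemma~\ref{lem:QCQP_validation} then says that $V_{\bm Q}$ is a valid CLF.

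To prove the inequality I will lift every feasible point of the QCQP to a feasible point of the SDP with the same objective value.

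\begin{itemize}
\item Take any $\bm z\in\mathbb{S}^{N-1}$ with $\bm z^\top\bm Q\bm B_i\bm z=0$ for all $i$, and set $\bm Z=\bm z\bm z^\top$. This matrix satisfies $\bm Z\succeq\bm 0$ and $\operatorname{trace}(\bm Z)=\bm z^\top\bm z=1$.
\item For any square matrix $\bm M$ we have $\bm z^\top\bm M\bm z=\bm z^\top\operatorname{sym}(\bm M)\bm z$. By the cyclic property of the trace, $\bm z^\top\bm M\bm z=\operatorname{trace}(\operatorname{sym}(\bm M)\bm z\bm z^\top)$.
\item Applying this with $\bm M=\bm Q\bm B_i$ shows that \eqref{eq:SDP_con2} holds. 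Applying it with $\bm M=\bm Q\bm A$ shows that the SDP objective at $\bm Z$ equals the QCQP objective at $\bm z$.
\end{itemize}

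Hence the feasible set of \eqref{eq:QCQP} embeds into that of \eqref{eq:SDP} with equal objective values. Taking suprema gives $[\star]\le[\ostar]$ whenever the QCQP is feasible.

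The case distinction in the definitions of $J^*$ is handled as follows.
\begin{itemize}
\item If the QCQP is infeasible, then $J^*_\text{QCQP}=-\infty$ and the inequality is trivial.
\item If the QCQP is feasible, the embedding shows that the SDP is feasible too. The inequality then follows from the comparison of suprema above.
\end{itemize}
The SDP can never be infeasible while the QCQP is feasible, so the $-\infty$ convention does not cause trouble.

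The only mildly delicate point is that both maxima are actually attained, so writing $\max$ is justified. For the QCQP, the feasible set is a closed subset of the compact sphere. For the SDP, the feasible set is closed and lies inside the compact set $\{\bm Z\succeq\bm 0,\ \operatorname{trace}\bm Z=1\}$. Both objectives are continuous. Attainment is not needed for the inequality itself, since suprema suffice, but it keeps the strict inequality $J^*_\text{SDP}(\bm Q)<0$ meaningful.

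I expect no real obstacle. The main care needed is the bookkeeping of the infeasible cases and the symmetrization identity.
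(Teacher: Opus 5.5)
Your proposal is correct and follows the same route as the paper: the rank-one lifting $\bm Z=\bm z\bm z^\top$ places every QCQP-feasible point inside the SDP feasible set with the same objective value, which gives $J^*_\text{QCQP}\le J^*_\text{SDP}$ (with the $-\infty$ infeasible cases handled as in the paper), and Lemma~\ref{lem:QCQP_validation} then finishes the argument. You only make explicit the embedding and the trace/symmetrization identity that the paper summarizes as ``feasible set enlargement.''
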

\begin{proof}
    Due to the feasible set enlargement, the maximum of the SDP problem \eqref{eq:SDP} serves as an upper bound to the maximum of the QCQP problem \eqref{eq:QCQP}. Further, for the same reason, infeasibility of the SDP guarantees infeasibility of the QCQP, in this case $J_\text{QCQP}^*= J_\text{SDP}^* = -\infty$. The objectives therefore satisfy $J_\text{QCQP}^* \leq J_\text{SDP}^*$, and negativity of $J_\text{SDP}^*$ guarantees negativity of $J_\text{QCQP}^*$. The remainder of the proof follows from Lemma \ref{lem:QCQP_validation}.
\end{proof}

\subsection{SDP-Exactness}\label{sec:SDP-exactness}
In the previous section, we showed that SDP \eqref{eq:SDP} serves as an upper bound for QCQP \eqref{eq:QCQP}. In certain cases, both problems obtain the same solution.

\begin{lemma}[SDP-exactness]\label{lem:SDP_exact}
    Problem \eqref{eq:SDP} recovers the solution to QCQP \eqref{eq:QCQP}, if the solution $\bm Z^*$ has rank 1. In this case, $J^*_\text{SDP}<0$ is a necessary condition for matrix $\bm Q$ to parameterize a valid CLF.
\end{lemma}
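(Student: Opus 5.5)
The plan is to show that a rank-one optimizer of the SDP \eqref{eq:SDP} is itself a feasible point of the QCQP \eqref{eq:QCQP} with the same objective value. Combined with the bound $J^*_\text{QCQP}\leq J^*_\text{SDP}$ from Lemma~\ref{lem:SDP_validation}, this forces equality. Necessity then follows by invoking Lemma~\ref{lem:QCQP_validation}.

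First, assume the SDP is feasible and attains its maximum at some $\bm Z^*$ with $\operatorname{rank}(\bm Z^*)=1$. Attainment holds because the feasible set is a closed subset of the compact set $\{\bm Z\succeq\bm 0,\ \operatorname{trace}(\bm Z)=1\}$ and the objective is linear. Since $\bm Z^*\succeq\bm 0$ has rank one, we can write $\bm Z^*=\bm z^*\bm z^{*\top}$ for some $\bm z^*\in\mathbb{R}^N$. The constraint \eqref{eq:SDP_con1} gives $\|\bm z^*\|^2=\operatorname{trace}(\bm Z^*)=1$, so $\bm z^*\in\mathbb{S}^{N-1}$. Using $\operatorname{trace}(\operatorname{sym}(\bm Q\bm B_i)\bm z^*\bm z^{*\top})=\bm z^{*\top}\bm Q\bm B_i\bm z^*$, the constraints \eqref{eq:SDP_con2} turn into \eqref{eq:QCQP_constraints}. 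Hence $\bm z^*$ is feasible for \eqref{eq:QCQP}. The same trace identity shows that its QCQP objective equals $J^*_\text{SDP}$, so $J^*_\text{QCQP}\geq J^*_\text{SDP}$. Together with the reverse inequality from the proof of Lemma~\ref{lem:SDP_validation}, we get $J^*_\text{QCQP}=J^*_\text{SDP}$, and $\bm z^*$ solves the QCQP.

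Second, for the necessity statement, suppose $\bm Q$ parameterizes a valid CLF. Lemma~\ref{lem:QCQP_validation} gives $J^*_\text{QCQP}(\bm Q)<0$, and the equality from the first step gives $J^*_\text{SDP}(\bm Q)<0$. Combined with Lemma~\ref{lem:SDP_validation}, this makes $J^*_\text{SDP}<0$ both necessary and sufficient in this case.

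The argument is elementary, so I expect no real obstacle. The only delicate points are bookkeeping ones. The hypothesis presupposes that an optimal $\bm Z^*$ exists, which covers the feasible case. If the SDP is infeasible, no $\bm Z^*$ exists and the hypothesis is vacuous; in any case both values are then $-\infty$, again by Lemma~\ref{lem:SDP_validation}. I would also state explicitly that the factorization $\bm Z^*=\bm z^*\bm z^{*\top}$ uses positive semidefiniteness, so that the nonzero eigenvalue is positive and its square root is real.
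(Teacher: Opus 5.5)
Your proposal is correct and takes essentially the same approach as the paper: you factor the rank-one optimizer as $\bm Z^*=\bm z^*\bm z^{*\top}$, resubstitute it into the objective and constraints to obtain $J^*_\text{QCQP}=J^*_\text{SDP}$, and then invoke Lemma~\ref{lem:QCQP_validation} for necessity. Your version is somewhat more careful than the paper's, since you check $\bm z^*\in\mathbb{S}^{N-1}$ via the trace constraint and close the equality explicitly with the relaxation bound $J^*_\text{QCQP}\leq J^*_\text{SDP}$, both of which the paper leaves implicit.
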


\begin{proof}
    Any positive semidefinite, rank 1 matrix $\bm Z$ can be written as the outer product of two vectors $\bm Z = \bm z\bm z^\top$. After solving \eqref{eq:SDP} and obtaining the solution $\bm Z^*$, we can perform the resubstitution for the objective
    \begin{align}
        \operatorname{trace}(\bm Q\bm A\bm Z^*) = \operatorname{trace}(\bm z^{*\top}\bm Q\bm A\bm z^*)
    \end{align}
    and similarly for the constraints \eqref{eq:QCQP_constraints}. The objectives therefore satisfy $J_\text{QCQP}^* = J_\text{SDP}^*$. As $J^*_\text{QCQP}<0$ is a necessary condition for matrix $\bm Q$ to parameterize a CLF by Lemma~\ref{lem:QCQP_validation}, $J_\text{SDP}^*<0$ is necessary as well. The vector $\bm z^*$ recovers the global optimum of QCQP \eqref{eq:QCQP}, concluding the proof.
\end{proof}
In the following, we will call problems \eqref{eq:SDP} SDP-exact if they recover the solution of the QCQP \eqref{eq:QCQP}. Due to Lemmas~\ref{lem:SDP_validation} and \ref{lem:SDP_exact}, $J^*_\text{SDP}(\bm Q)<0$ is a necessary and sufficient condition for $\bm{Q}$ to parameterize a valid CLF when SDP-exactness holds, which illustrates the significance of this property. Importantly, it can be straightforwardly shown to always hold for bilinear systems with scalar control input.
\begin{lemma}\label{lem:single_input}
    Problem \eqref{eq:SDP} is SDP-exact if $m=1$.
\end{lemma}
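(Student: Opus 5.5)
The plan is to show that for $m=1$ the SDP \eqref{eq:SDP} always has an optimal solution of rank one whenever it is feasible. Lemma~\ref{lem:SDP_exact} then gives $J^*_\text{SDP}=J^*_\text{QCQP}$. The key observation is that for $m=1$ the SDP has only two linear equality constraints, \eqref{eq:SDP_con1} and \eqref{eq:SDP_con2}. The classical rank bound for SDPs (Barvinok--Pataki) therefore says that some optimal extreme point has rank $r$ with $r(r+1)/2\le 2$, i.e.\ $r=1$. Rather than citing this bound, I would give the short rank-reduction argument directly, since it also handles the infeasible case and attainment of the maximum.

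\textbf{Step 1: preliminaries.} If the SDP is infeasible, then $J^*_\text{SDP}=-\infty$. Since the QCQP feasible set embeds into the SDP feasible set via $\bm z\mapsto \bm z\bm z^\top$, the QCQP is infeasible too, and both values are $-\infty$. If the SDP is feasible, its feasible set is closed and bounded, because $\bm Z\succeq\bm 0$ and $\operatorname{trace}(\bm Z)=1$ imply $\|\bm Z\|_F\le 1$. Hence the maximum is attained at some $\bm Z^*$.

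\textbf{Step 2: rank reduction.} Suppose $\operatorname{rank}(\bm Z^*)=r\ge 2$, and factor $\bm Z^*=\bm V\bm V^\top$ with $\bm V\in\mathbb{R}^{N\times r}$ of full column rank. Consider perturbations $\bm Z(t)=\bm V(\bm I+t\bm\Delta)\bm V^\top$ with $\bm\Delta$ a symmetric $r\times r$ matrix. We require
\[
\operatorname{trace}(\bm V^\top\bm V\bm\Delta)=0,\qquad \operatorname{trace}(\bm V^\top\operatorname{sym}(\bm Q\bm B_1)\bm V\bm\Delta)=0 .
\]
These are two homogeneous linear equations on the space of symmetric $r\times r$ matrices, which has dimension $r(r+1)/2\ge 3$. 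So a nonzero solution $\bm\Delta$ exists. For $|t|$ small, $\bm I+t\bm\Delta\succ\bm 0$, so $\bm Z(t)$ is feasible for both signs of $t$.

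The objective along this path is affine in $t$. Since $t=0$ is an interior maximizer, the slope must vanish, so the objective is constant along $\bm Z(t)$. Because $\bm\Delta\neq\bm 0$, it has a nonzero eigenvalue. Choose the sign of $t$ so that $t$ times that eigenvalue is negative, and increase $|t|$ until $\bm I+t\bm\Delta$ first becomes singular. The resulting $\bm Z(t)$ is still positive semidefinite, still feasible, and still optimal, but has rank strictly smaller than $r$. Iterating this step yields an optimal solution of rank one.

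\textbf{Step 3: conclusion.} Applying Lemma~\ref{lem:SDP_exact} to the rank-one optimizer gives $J^*_\text{SDP}=J^*_\text{QCQP}$, with the optimal $\bm z^*$ recovered from $\bm Z^*=\bm z^*\bm z^{*\top}$.

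The main point requiring care is Step 2, specifically the dimension count. It works precisely because $m=1$ leaves only two constraints; with objective invariance obtained from optimality rather than imposed as a third equation, $r\ge2$ suffices. For $m\ge 2$ the count $r(r+1)/2>m+1$ fails at $r=2$, which explains why the paper only claims exactness empirically for multi-input systems. An alternative I would mention is convexity of the joint numerical range of $(\operatorname{sym}(\bm Q\bm A),\operatorname{sym}(\bm Q\bm B_1))$ on the sphere (Brickman). That route needs a separate argument for $N=2$, so I prefer the rank-reduction proof.
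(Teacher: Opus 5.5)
Your proof is correct and is essentially the paper's argument. The paper cites Pataki's rank bound $r(r+1)/2\le n_c$ with $n_c=m+1=2$ for an extreme-point optimizer, while you reprove that bound with the standard rank-reduction step and also spell out attainment of the maximum, existence of a rank-one optimizer, and the infeasible case, which the paper leaves implicit.
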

\begin{proof}
    Let $\bm Z^*$ denote the extremal point of \eqref{eq:SDP} with rank $r\in\mathbb{N}_{\leq N}$ and $n_c$ the number of constraints. Then Pataki's Lemma \cite{pataki1998} states that 
     \begin{align}\label{eq:Pataki}
         \frac{r(r+1)}{2} \leq n_c.
     \end{align}     
    Problem \eqref{eq:SDP} includes $n_c = m+1=2$ many constraints. Solving \eqref{eq:Pataki} for $r$ with $n_c=2$ yields $r\leq -1/2\pm\sqrt{17}/2 $, which implies $r=1$ as it is a natural number.
\end{proof}

When the system has multiple control inputs, SDP-exactness is not guaranteed in general, but it is not obvious under which circumstances it is not satisfied. Therefore, we investigate the SDP \eqref{eq:SDP} in more detail. For this, note that in the absence of \eqref{eq:SDP_con2}, the maximizer of \eqref{eq:SDP} becomes $\bm{Z}=\bm{v}_1\bm{v}_1^T$, where $\bm{v}_1$ is the unit-length eigenvector corresponding to the largest eigenvalue of $\operatorname{sym}(\bm{Q}\bm{A})$ and $\bm{Z}$ (by construction). This fact immediately follows from the relationship between traces and eigenvalues of matrices. When we have $m$ control inputs, we get $m$ additional constraints, which restrict the admissible values for $\bm{z}$ in \eqref{eq:SDP}. If $\bm{v}_1$ is still among the admissible vectors, it remains the optimizer and SDP-exactness is ensured. It is straightforward to see that for $N\approx m$, this is difficult to satisfy due to the small number of admissible values that remain for $\bm{z}$. However, in the scenario of Koopman-based learned bilinear systems \eqref{eq:bilinear_form}, we often have $N\gg m$ since large dictionaries are required without access to manually designed observables as discussed in Section \ref{sec:background}. Then, $N-m\approx N$ degrees of freedom remain for $\bm{z}$, such that we hypothesize that the optimization problem \eqref{eq:SDP} behaves almost as if the constraints \eqref{eq:SDP_con2} did not exist and SDP-exactness is generally retained. This illustrates that SDP-exactness is tightly coupled to the size $N$ of the dictionary for Koopman-based learning of bilinear systems.
\looseness=-1

To support our hypothesis, we empirically analyze SDP-exactness in a simulation. To this end, we solve the SDP \eqref{eq:SDP} 20 times for each $m\in\{2,5,8\}$ and $N\in\{10,...,100\}$, where we sample random matrices $\bm A, \bm B_i, \bm P\in\mathbb{R}^{N\times N}$ and set $\bm Q = \bm P\bm P^\top$ to obtain a positive definite matrix. Moreover, we manipulate matrix $\bm A$ to possess only one unstable eigenvalue, such that the least restrictive, non-trivial scenario is considered for \eqref{eq:SDP}. Finally, if the problem is feasible, we extract the rank of $\bm Z^*$ numerically (with a tolerance of $10^{-7}$ for its eigenvalues); otherwise, we resample and solve again. The results of this numerical evaluation are illustrated in Figure \ref{fig:rank_Z_plot}. It can be clearly observed that the average rank of the maximizer $\bm{Z}^*$ approaches $1$ for increasing $N$. While the effect is slower for larger numbers of control inputs $m$, this fully aligns with the previous discussion. Thus, SDP-exactness is consistently observed for large $N$. 

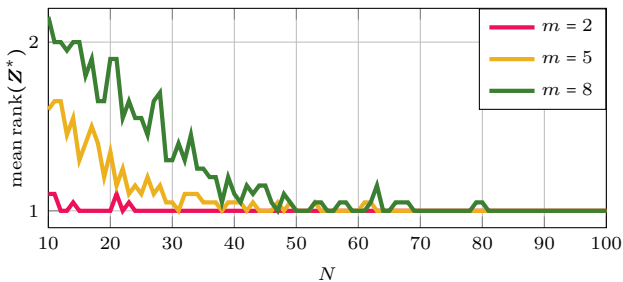
\begin{figure}[t]
    \vspace{.2cm}
    \definecolor{mycolor1}{rgb}{0.92941,0.69412,0.12549}
    \pgfplotsset{width=9cm, compat = 1.18, 
	height = 4.5cm, grid= major, 
	legend cell align = left, ticklabel style = {font=\scriptsize},
	every axis label/.append style={font=\scriptsize},
	legend style = {font=\scriptsize},
    }
    \begin{tikzpicture}
	\begin{axis}[
		grid=both,
		xmin=10, xmax=100,
		ymin=0.9, ymax=2.2,
		xtick distance=10,
		ytick distance=1,
		ylabel=$\operatorname{mean}\operatorname{rank}(\bm Z^*)$, xlabel=$N$,
		set layers=standard,
		legend style={font=\scriptsize, at={(1,1)},anchor=north east, row sep=2pt},
		ylabel shift = -6 pt]
				
		\addplot[ultra thick,OrangeRed] table[x=n,y=m2]{rank_plot.txt};
		\addlegendentry{$m=2$}
				
		\addplot[ultra thick,mycolor1] table[x=n,y=m5]{rank_plot.txt};
		\addlegendentry{$m=5$}
				
		\addplot[ultra thick,OliveGreen] table[x=n,y=m8]{rank_plot.txt};
		\addlegendentry{$m=8$}
				
        \end{axis}
    \end{tikzpicture}
    \vspace{-0.3cm}
    \caption{Averaged ranks of $20$ recovered SDP-solutions for each state and control dimension ($N\in\{10,...,100\},\ m\in\{2,5,8\}$). For $N>80$, for each $N$ and $m$, $\operatorname{rank}(\bm Z^*) = 1$ across all 20 runs. Note that in the operator-theoretic framework, $N$ typically exceeds the displayed dimensions.}
    \label{fig:rank_Z_plot}
\end{figure}

\subsection{Constant Control}\label{sec:constant_control}
To complete our derivations, we make the following assumption justified by the results in  Section \ref{sec:SDP-exactness}.

\begin{assumption}\label{ass:SDP_exactness}
    The SDP-solution satisfies SDP-exactness. 
\end{assumption}

Under this assumption, it suffices to analyze the convex SDP domain for existence conditions of a quadratic CLF. We present our main result in the following theorem.
\begin{theorem}\label{thm:constant_control}
    Under Assumption \ref{ass:SDP_exactness}, a valid quadratic CLF exists if and only if there exists a constant stabilizing controller.
\end{theorem}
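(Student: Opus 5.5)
The plan is to prove the two directions separately; only necessity needs Assumption~\ref{ass:SDP_exactness}, and the core of that direction is Lagrangian duality for the SDP \eqref{eq:SDP}.

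\emph{Sufficiency.} Suppose a constant input $\bar{\bm u}$ stabilizes the system. Then the closed loop is linear, $\dot{\bm z} = (\bm A + \sum_i \bar u_i \bm B_i)\bm z$, and $\bm A_{\bar{\bm u}} := \bm A + \sum_i \bar u_i \bm B_i$ is Hurwitz. The Lyapunov equation $\bm A_{\bar{\bm u}}^\top \bm Q + \bm Q \bm A_{\bar{\bm u}} = -\bm I$ has a solution $\bm Q \succ \bm 0$. Hence $\dot V_{\bm Q}(\bm z,\bar{\bm u}) = -\tfrac12\|\bm z\|^2 < 0$ for all $\bm z \neq \bm 0$. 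In particular $\inf_{\bm u}\dot V_{\bm Q} \prec 0$, so $V_{\bm Q}$ is a quadratic CLF.

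\emph{Necessity.} Suppose $V_{\bm Q}$ is a valid quadratic CLF.
\begin{enumerate}
\item By Lemma~\ref{lem:QCQP_validation}, $J^*_\text{QCQP}(\bm Q)<0$. By Assumption~\ref{ass:SDP_exactness} and Lemma~\ref{lem:SDP_exact}, this gives $J^*_\text{SDP}(\bm Q)<0$.
\item Form the Lagrangian dual of \eqref{eq:SDP}. Introduce a multiplier $\lambda$ for \eqref{eq:SDP_con1} and $\mu_i$ for \eqref{eq:SDP_con2}. The dual reads
\begin{align*}
\min_{\lambda,\bm\mu}\ \lambda \quad \text{s.t. } \lambda \bm I - \operatorname{sym}(\bm Q\bm A) - \sum_{i=1}^m \mu_i \operatorname{sym}(\bm Q\bm B_i) \succeq \bm 0 .
\end{align*}
\item The primal feasible set (if nonempty) is compact, because $\operatorname{trace}(\bm Z)=1$ with $\bm Z\succeq\bm 0$. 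The dual is strictly feasible: take $\lambda$ large. Hence conic strong duality holds and the dual optimum is attained, so the dual value equals $J^*_\text{SDP}<0$.
\item If the primal is infeasible, a theorem of alternatives (Farkas for SDPs) yields $\bm\mu$ with $\sum_i \mu_i \operatorname{sym}(\bm Q\bm B_i) \succ \bm 0$. Scaling this $\bm\mu$ up then makes $\lambda<0$ achievable, which is consistent with $-\infty$.
\item In either case there are $\bm\mu$ and $\lambda<0$ with $\operatorname{sym}(\bm Q\bm A) + \sum_i \mu_i \operatorname{sym}(\bm Q\bm B_i) \preceq \lambda\bm I \prec \bm 0$.
\item Set $\bar u_i := -\mu_i$ (the sign depends on the dual convention). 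The previous step becomes $\operatorname{sym}\bigl(\bm Q(\bm A+\sum_i \bar u_i\bm B_i)\bigr) \prec \bm 0$.
\item With $\bm Q\succ\bm 0$ this is a Lyapunov inequality for the linear closed loop. Therefore $\bm A + \sum_i \bar u_i\bm B_i$ is Hurwitz, and the constant input $\bar{\bm u}$ is stabilizing.
\end{enumerate}

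\emph{Main obstacle.} The delicate step is justifying zero duality gap and dual attainment, including the infeasible-primal case $J^*_\text{SDP}=-\infty$. I would handle it by noting that Slater's condition holds on the dual side, which gives attainment of the primal value when the primal is feasible. For the infeasible case I would invoke the SDP theorem of alternatives directly. A subtlety arises if the primal feasible set is nonempty but lacks a strictly feasible point; compactness of the trace-one slice should still ensure no gap, and I would use dual strict feasibility to cover this.
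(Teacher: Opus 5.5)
Your proof is correct and is essentially the paper's argument. The paper likewise turns $J^*_\text{SDP}(\bm Q)<0$ into a semidefinite certificate whose multipliers are the constant inputs, but it applies the SDP Farkas lemma once to the homogeneous problem (find $\bm Z\succeq\bm 0$, $\bm Z\neq\bm 0$ with $\operatorname{trace}(\operatorname{sym}(\bm Q\bm A)\bm Z)\geq 0$ and the equality constraints). That handles the feasible and infeasible cases together and needs neither your case split nor a duality-gap argument.

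A few minor fixes. With your dual, step 5 gives $\bar u_i=+\mu_i$, not $-\mu_i$. In step 4 you must scale $\bm\mu$ by a large negative factor. Dual attainment can fail when the primal has no strictly feasible point, e.g.\ when some $\operatorname{sym}(\bm Q\bm B_i)\succeq\bm 0$ is singular. You never need it, though: $\inf\lambda=J^*_\text{SDP}<0$ already gives a dual-feasible pair with $\lambda<0$.
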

\begin{proof}
Sufficiency follows from the fact that a constant stabilizing controller turns the bilinear system into an asymptotically stable linear system which always admits a quadratic CLF \cite[Theorem 4.6]{khalil2002}. For necessity, assume that $\bm Q$ parameterizes a CLF. Under Assumption \eqref{ass:SDP_exactness}, this implies that $J_\text{SDP}^*(\bm Q) < 0$ due to Lemma \ref{lem:SDP_exact}.
Now, define the feasibility program
\begin{subequations}\label{eq:SDP_feasibility_problem}
\begin{align}
    &\text{find}\ \bm Z\succeq \bm 0, \ \bm Z\neq \bm 0 \\
    &\text{s.t. }\operatorname{trace}(\operatorname{sym}(\bm Q\bm A)\bm Z) \geq 0\\
    &\quad~~\!\operatorname{trace}(\operatorname{sym}(\bm Q\bm B_i)\bm Z) = 0, \ i=1,...,m
\end{align}
\end{subequations}
whose infeasibility follows from $J^*_\text{SDP}(\bm Q)<0$. Due to the infeasibility of  \eqref{eq:SDP_feasibility_problem}, the semidefinite version of Farkas Lemma \cite[Lemma 6.3.3]{lovasz2003} guarantees the existence of $c_1,...,c_m\in\mathbb{R}$, such that
\begin{align}\label{eq:infeasibility_certificate}
    -\operatorname{sym}(\bm Q\bm A)+\sum_{i=1}^m \operatorname{sym}(\bm Q \bm B_i)c_i \succ \bm 0.
\end{align}
Finally, we define constant controllers $u_i = -c_i$, which satisfy the Lyapunov descent condition \eqref{eq:descent_condition} due to \eqref{eq:infeasibility_certificate}. Thus, stability of the closed-loop system under the constant control input $u_i=-c_i$ follows from \cite[Theorem 4.2]{khalil2002}.
\end{proof}

While Assumption \ref{ass:SDP_exactness} is a crucial condition for this result, it is empirically observed
to hold for bilinear systems with large state space in Section \ref{sec:SDP-exactness}. Such bilinear models are typically approximations of nonlinear systems when they are obtained using Koopman operator-based methods with large dictionaries. Since these nonlinear systems are usually not stabilizable by constant control inputs, an accurate approximation using a bilinear Koopman operator-based model can also not be expected to be stabilizable by constant control. Empirically, we also observe that constant stabilizing controllers for bilinear systems are rare.
Thus, Theorem~\ref{thm:constant_control} raises crucial questions about the existence of quadratic CLFs for these high-dimensional Koopman operator-based bilinear models. Note that this insight is not a contradiction to prior work: Our results do not show that constant control stabilization is a requirement for quadratic CLFs when dealing with low-dimensional bilinear models, which are commonly the focus in numerical evaluations \cite{strasser2024}.

\subsection{Sufficient Condition for Constant Control}\label{sec:suff_cond}

Determining whether a constant stabilizing controller exists is an NP-hard problem~\cite{toker1995}, and a detailed analysis lies beyond the scope of this paper. For the single-input case, we refer the reader to~\cite{luesink1989, elliott2009}. The next Lemma presents two sufficient condition ensuring the existence of constant stabilizing controllers, and consequently, of a quadratic CLF. It should be emphasized that the goal is not to employ these constant controllers; rather, their existence is a necessary prerequisite for constructing a quadratic CLF.

\begin{lemma}\label{lem:sufficient_existence_condition}
    The system \eqref{eq:bilinear_form} admits a constant stabilizing controller, if at least one of the following two conditions hold:
    \begin{itemize}
        \item [1.] Matrix $\bm A$ only possesses eigenvalues with negative real part.
        \item [2.] At least one of the matrices $\bm B_i$ only possesses eigenvalues with purely negative or positive, real part. 
    \end{itemize}
\end{lemma}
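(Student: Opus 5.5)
Under a constant input $\bm u\equiv\bar{\bm u}$, system \eqref{eq:bilinear_form} becomes the linear time-invariant system $\dot{\bm z}=\bm M(\bar{\bm u})\bm z$ with
\begin{align*}
\bm M(\bar{\bm u}) := \bm A+\sum_{i=1}^m \bar u_i \bm B_i .
\end{align*}
So it suffices to exhibit $\bar{\bm u}$ for which $\bm M(\bar{\bm u})$ is Hurwitz; asymptotic stability of the origin in the sense of Definition~\ref{def:stability} then follows from standard linear theory, e.g.\ \cite[Theorem 4.5]{khalil2002}. Both conditions will be handled this way, and I will treat them separately.

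\emph{Condition 1} is immediate: choose $\bar{\bm u}=\bm 0$. Then $\bm M(\bm 0)=\bm A$, which is Hurwitz by assumption.

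\emph{Condition 2} is the substantive case. I read it as saying that some index $j$ has $\operatorname{Re}\lambda<0$ for all eigenvalues $\lambda$ of $\bm B_j$, or $\operatorname{Re}\lambda>0$ for all of them. I will set $\bar u_i=0$ for $i\neq j$ and $\bar u_j=\sigma k$, where $k>0$ is a gain to be chosen and $\sigma=+1$ in the first case, $\sigma=-1$ in the second. In both cases $\sigma\bm B_j$ is Hurwitz, and
\begin{align*}
\bm M = \bm A+\sigma k\bm B_j = k\left(\sigma\bm B_j+\tfrac1k\bm A\right).
\end{align*}
The spectrum of $\bm M$ is $k$ times the spectrum of $\sigma\bm B_j+\varepsilon\bm A$ with $\varepsilon=1/k$. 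Since $k>0$, $\bm M$ is Hurwitz if and only if $\sigma\bm B_j+\varepsilon\bm A$ is.

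The remaining step is to show that $\sigma\bm B_j+\varepsilon\bm A$ is Hurwitz for small $\varepsilon>0$. I plan to use a Lyapunov argument rather than eigenvalue continuity, because it gives an explicit bound on $\varepsilon$. Because $\sigma\bm B_j$ is Hurwitz, there is $\bm P\succ\bm 0$ with
\begin{align*}
\bm P\sigma\bm B_j+\sigma\bm B_j^\top\bm P=-\bm I .
\end{align*}
Then
\begin{align*}
\bm P(\sigma\bm B_j+\varepsilon\bm A)+(\sigma\bm B_j+\varepsilon\bm A)^\top\bm P = -\bm I+\varepsilon(\bm P\bm A+\bm A^\top\bm P),
\end{align*}
which is negative definite whenever $\varepsilon<1/\|\bm P\bm A+\bm A^\top\bm P\|$ (any $\varepsilon>0$ works if that norm is zero). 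Choosing $k$ larger than this threshold makes $\bm M$ Hurwitz. The same $\bm P$ then also provides a quadratic CLF, matching the sufficiency direction of Theorem~\ref{thm:constant_control}.

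The main obstacle is only the interpretation of condition 2. If it were read as allowing eigenvalues of $\bm B_j$ with both signs of real part, the high-gain argument would fail, so the proof needs all eigenvalues of $\bm B_j$ strictly on one side of the imaginary axis. Purely imaginary eigenvalues must be excluded. The sign of $\bar u_j$ is then chosen to map that half-plane to the open left half-plane.
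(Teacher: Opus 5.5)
Your proof is correct and follows essentially the same route as the paper: $\bm u=\bm 0$ for condition 1; for condition 2, a sign choice that makes the input matrix Hurwitz, a Lyapunov matrix for it, and a large constant gain that dominates the drift term in the Lyapunov derivative. Where you differ, it works in your favor: the paper scales all qualifying $\bm B_i$ at once and asserts that $\sum_i c_i\bm B_i$ is Hurwitz, which does not follow in general because sums of Hurwitz matrices need not be Hurwitz, whereas your choice of a single index $j$ with all other inputs set to zero avoids this issue.
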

\begin{proof}
    The first case is easy to verify, because the drift is already asymptotically stable and the constant controller $\bm u = \bm 0$ leads to an asymptotically stable system. For the second case, we define scalar multipliers 
    \begin{align}
        c_i = \quad \begin{cases}
            \epsilon_i & \text{if } \mathfrak{Re}(\lambda_{\max}(\bm B_i))<0 \\
            -\epsilon_i & \text{if } \mathfrak{Re}(\lambda_{\max}(-\bm B_i))<0\\
            0 & \text{otherwise,}
        \end{cases}
    \end{align}
    where $\epsilon_i>0$ and $\mathfrak{Re}(\cdot)$ denotes the real part of a complex number and $\lambda_{\max}(\cdot)$ maps to the eigenvalue with the largest real part. This way, the linear combination $\sum_{i=1}^m \bm B_ic_i$ only possesses eigenvalues with negative real part and a matrix $\bm Q\succ \bm 0$ exists such that $\sum_{i=1}^m \bm z^\top \bm Q\bm B_i \bm z c_i \prec 0$. Let $u_i = c_i\Bar{u}, \Bar{u}\in\mathbb{R}$ and recall the Lyapunov descent condition \eqref{eq:descent_condition} for system \eqref{eq:bilinear_form}, then
    \begin{align}
        \dot V_{\bm Q} = \ &\bm z^\top (\bm Q \bm A + \Bar{u}\bm Q\sum_{i=1}^m \bm B_ic_i)\bm z\\
        \leq \ &\lambda_{\max}(\operatorname{sym}(\bm Q\bm A))\|\bm z\|^2\notag \\ 
        + \ &\Bar{u}\lambda_{\max}(\operatorname{sym}(\bm Q\sum_{i=1}^m \bm B_ic_i))\|\bm z\|^2 \overset{!}{\prec} 0.
    \end{align}
    We thus require
    \begin{align}
        \lambda_{\max}(\operatorname{sym}(\bm Q\bm A)) + \Bar{u}\lambda_{\max}(\operatorname{sym}(\sum_{i=1}^m \bm Q\bm c_i\bm B_i)) < 0.
    \end{align}
    For any $\delta>0$, the constant controller 
    \begin{align}
        \Bar{u} = -\frac{\lambda_{\max}(\operatorname{sym}(\bm Q\bm A))+\delta}{\lambda_{\max}(\operatorname{sym}(\sum_{i=1}^m \bm Q\bm c_i\bm B_i))}
    \end{align}
    asymptotically stabilizes system \eqref{eq:bilinear_form}, concluding the proof.\looseness = -1
\end{proof}

    \section{Conclusion}\label{sec:conclusion}
This work highlights the inherent limitations of quadratic Lyapunov functions when applied to learned bilinear and homogeneous Koopman models. We show that constant control is an almost necessary condition for a high-dimensional homogeneous bilinear system to admit a quadratic CLF, and we establish that this condition always holds for single-input systems, regardless of their dimension. These findings suggest that quadratic CLFs may be inadequate for stabilizing high-dimensional bilinear models unless additional structural constraints are imposed during the learning process to ensure compatibility with constant-control conditions.
    
    \bibliographystyle{IEEEtran}
    \bibliography{IEEEabrv, references}

\end{document}